\title{\textbf{A group law on the projective plane\\with applications in Public Key Cryptography}}
\author{\textsc{R. Durán Díaz$^1$, V. Gayoso Martínez$^2$}
\and
\textsc{L. Hernández Encinas$^2$, and J. Muñoz Masqué$^2$}
\bigskip \\
$^1$ Departamento de Automática, Universidad de Alcalá,\\
E-28871 Alcalá de Henares, Spain\\
E-mail: \texttt{raul.duran@uah.es}\\
$^2$ Instituto de Tecnologías Físicas y de la Información (ITEFI)\\
Consejo Superior de Investigaciones Científicas (CSIC),\\
E-28006 Madrid, Spain\\
E-mails: \texttt{\{victor.gayoso, luis, jaime\}@iec.csic.es}
}
\date{}
\newtheorem{theorem}{Theorem}[section]
\newtheorem{proposition}[theorem]{Proposition}
\newtheorem{lemma}[theorem]{Lemma}
\newtheorem{corollary}[theorem]{Corollary}
\theoremstyle{remark}
\newtheorem{remark}[theorem]{Remark}
\begin{document}

\maketitle

\begin{abstract}
\noindent
We present a new group law defined on a subset of the projective plane
$\mathbb{F}P^2$ over an arbitrary field $\mathbb{F}$, which lends itself to
applications in Public Key Cryptography, in particular to a Diffie-Hellman-like
key agreement protocol. We analyze the computational difficulty of solving the
mathematical problem underlying the proposed Abelian group law and we prove that
the security of our proposal is equivalent to the discrete logarithm problem
in the multiplicative group of the cubic extension of the finite field
considered. Finally, we present a variant of the proposed group law but over the
ring $\mathbb{Z}/pq\mathbb{Z}$, and explain how the security becomes enhanced,
though at the cost of a longer key length.
\end{abstract}

\medskip

\noindent \textit{Keywords:\/} Abelian group law,
discrete logarithm problem, norm of an extension, projective cubic curve

\noindent \textit{Mathematics Subject Classification 2010:\/}
Primary 20K01 Secondary 12F05, 14H50, 15A04, 68Q25, 94A60

\section{Introduction}

The main contribution of this paper is to propose a new group law, defined on
the complement of a projective cubic plane curve, prove its
properties, and consider the possibility of using it as a building block for
cryptographic applications in the field of Public Key Cryptography (PKC).

The paper is organized as follows: Section~\ref{S:Grouplaw} presents the group
law and its main characteristics and properties. In particular, we define the
mathematical problem associated to the considered group law, and we give the
explicit formulas to compute the group operation of any two elements of the
group. These formulas, which involve coefficients from the base field, are
applicable to any pair of elements of the group with no exception whatsoever,
which is advantageous in view of possible cryptographic applications.

As an application of the defined group law to PKC, a cryptographic protocol, in
particular, a Diffie-Hellman-like key agreement protocol, is defined in
section~\ref{S:Crypto}. We also analyze the computational difficulty of solving
the mathematical problem underlying the defined group law, and we prove that the
hardness of our problem is equivalent to that of the discrete logarithm problem
on the multiplicative group of the cubic extension of the finite field
considered.

In section~\ref{S:Robust} we consider an entirely analogous system, but
shifting the general base field to the ring $\mathbb{Z}/pq\mathbb{Z}$. We
make it clear that this last proposal enhances the security of the system,
since it now depends not only on DLP but also on the factorization problem,
though at the price of doubling the key length.

Last section is devoted to the conclusions.

\section{The group law defined}\label{S:Grouplaw}

Let $\mathbb{F}$ be a field and let us consider a linear endomorphism
$A\colon V\to V$ of the vector space $V=\mathbb{F}^3$.
We define the polynomial $Q(\mathbf{x})=\det (x_1I+x_2A+x_3A^2)$, where
$\mathbf{x}=(x_1,x_2,x_3)\in V$. The polynomial $Q$ is homogeneous of degree $3$,
and does not depend on $A$, but only on the characteristic polynomial
$\chi(X)$ of $A$.

A new group law is proposed $\oplus\colon V\times V\to V$. Let the multiplicative
group $\mathbb{F}^{\ast }$ act on $V$ by the diagonal action, i.e.,
$\lambda \cdot(x_{1},x_{2},x_{3}) =(\lambda x_{1},\lambda x_{2},\lambda x_{3})$,
and let denote by $\mathbb{F}P^{2}$ the projective plane, namely
$\mathbb{F}P^{2}=(V\setminus\{ (0,0,0)\} )/\mathbb{F}^{\ast}$.
Then the proposed group law induces an Abelian group law
on $\mathbb{F}P^{2}\setminus Q^{-1}(0)$.

If the characteristic polynomial $\chi (X)$ is irreducible in $\mathbb{F}[X]$,
then $Q^{-1}(0)=\emptyset $. In this case, the group law extends to the whole
set $\mathbb{F}P^2$; moreover, if the base field is a finite field
$\mathbb{F}_q$, with characteristic different from $2$ or $3$, then the group
$\mathbb{G}=(\mathbb{F} _qP^2,\oplus )$ is proved to be cyclic.

The latter property permits us to apply the notion of \emph{discrete logarithm}
to the group $\mathbb{G}$. If we fix a generator $g\in \mathbb{F}_qP^2$, then any
element $h$ of the group is the addition of $g$ with itself a finite number of
times, say $n$, so that $h=g\oplus g \oplus \overset{(n}\cdots\oplus g = [n]g$.
The number $n$ is the logarithm of $h$ to the base $g$.

Given any element $h\in\mathbb{G}$, and a generator $g$ of the group, the
\emph{discrete logarithm problem} (DLP), consists in finding the smallest
integer $n$, such that $h=[n]g$. In this work, we prove that the DLP over
$\mathbb{G}$ with a proper choice of the generator is equivalent to the DLP
over the multiplicative group $(\mathbb{F}_{q^3})^\ast $.

Popular current cryptosystems are based on the discrete logarithm problem over
different groups, such as the group of invertible elements in a finite field, or
the group of points of an elliptic curve with the addition of points as group
operation. Our proposal could fit perfectly well in the same niche.

As is the case for analogous public key protocols, the users of the present
proposal agree to a single base field $\mathbb{F}_q$ but each one of them is
allowed to select at will any (irreducible) polynomial
\[
\begin{array}
[c]{rll}
\chi (X)= & X^3-c_1X^2-c_2X-c_3,
& c_1,c_2,c_3\in \mathbb{F}_q.
\end{array}
\]
The public system parameters include the base field $\mathbb{F}_q$,
coefficients $c_1, c_2, c_3\in \mathbb{F}_q$, and the generator $g$.

Next we prove that the polynomial $Q$ does not depend on $A$, but only on the
characteristic polynomial $\chi(X)$ of $A$.
\begin{lemma}
\label{lemma1}
Let $\mathbb{F}$ be a field and let $V$ be the vector space
$\mathbb{F}^3$. If $A\colon V\to V$ is a linear map such that
the endomorphisms $I,A,A^2$ are linearly independent,
then the homogeneous cubic polynomial
$Q(\mathbf{x})=\det (x_1I+x_2A+x_3A^2)$ does not depend on the matrix
$A$ but only on the coefficients $c_1,c_2,c_3$ of its
characteristic polynomial $\chi (X)=X^3-c_1X^2-c_2X-c_3$.
\end{lemma}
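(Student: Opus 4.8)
The plan is to recognize $Q(\mathbf{x})=\det M(\mathbf{x})$, where $M(\mathbf{x})=x_1I+x_2A+x_3A^2$, as a \emph{norm form} on the quotient algebra $\mathbb{F}[X]/(\chi(X))$, an object that by construction involves only $\chi$. First I would unpack the hypothesis: the linear independence of $I,A,A^2$ in $\mathrm{End}(V)$ says precisely that the minimal polynomial of $A$ has degree $3$, and since $\dim V=3$ it must then coincide with the characteristic polynomial $\chi(X)$. Hence $A$ is non-derogatory and admits a cyclic vector $v$, i.e. $\{v,Av,A^2v\}$ is a basis of $V$. This is the step I expect to require the most care, since it is where the independence hypothesis genuinely enters: one has to argue that a $3\times 3$ matrix whose minimal and characteristic polynomials agree has rational canonical form a single companion block, equivalently possesses a cyclic vector.

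Next I would use the cyclic vector to build an isomorphism of $\mathbb{F}[X]$-modules $\Phi\colon \mathbb{F}[X]/(\chi(X))\to V$, sending the class of $p(X)$ to $p(A)v$, where $X$ acts by multiplication on the source and by $A$ on the target. Under $\Phi$ the endomorphism $M(\mathbf{x})$ corresponds to multiplication by $m(X)=x_1+x_2X+x_3X^2$ on $\mathbb{F}[X]/(\chi(X))$. Because $\Phi$ is an isomorphism intertwining the two operators, their determinants coincide, so $Q(\mathbf{x})$ equals the determinant of the multiplication-by-$m(X)$ map on the quotient algebra. Computing that determinant in the fixed basis $\{1,X,X^2\}$ uses only the reduction rule $X^3=c_1X^2+c_2X+c_3$ coming from Cayley--Hamilton; consequently every entry of the representing matrix, and therefore $Q$ itself, is a polynomial in $x_1,x_2,x_3$ whose coefficients are polynomials in $c_1,c_2,c_3$ alone. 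This yields the claim, and it is exactly the norm-of-an-extension structure the construction is built to exploit.

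As an independent check, and to recover an explicit product formula, I would pass to an algebraic closure $\overline{\mathbb{F}}$ and triangularize $A$, so that $M(\mathbf{x})$ becomes triangular with diagonal entries $x_1+x_2\lambda_i+x_3\lambda_i^2$, where $\lambda_1,\lambda_2,\lambda_3$ are the eigenvalues of $A$. This gives $Q(\mathbf{x})=\prod_{i=1}^3(x_1+x_2\lambda_i+x_3\lambda_i^2)$, a symmetric polynomial in the $\lambda_i$; expanding it and invoking the fundamental theorem of symmetric polynomials rewrites its coefficients through the elementary symmetric functions $e_1=c_1$, $e_2=-c_2$, $e_3=c_3$, reconfirming that $Q$ depends only on $\chi(X)$. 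It is worth noting that this second route makes no use of the cyclic-vector hypothesis, which suggests the hypothesis is present mainly to keep $M(\mathbf{x})$ a genuine ternary cubic and to secure the clean module identification used above, rather than being indispensable for the bare conclusion.
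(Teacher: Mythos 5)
Your proof is correct, but it follows a genuinely different route from the paper's. The paper argues as you do that linear independence of $I,A,A^2$ forces the minimal polynomial to equal $\chi$, but then it passes to the algebraic closure and reduces $A$ to one of three canonical forms (diagonal with distinct eigenvalues, diagonal plus a $2\times 2$ Jordan block, or a single $3\times 3$ Jordan block), computing $\det(x_1I+x_2M_i+x_3M_i^2)$ explicitly in each case and observing that all three cases yield the same polynomial in $c_1,c_2,c_3$. You instead exploit non-derogatoriness to produce a cyclic vector, identify $(V,A)$ with $\bigl(\mathbb{F}[X]/(\chi),\,\cdot X\bigr)$ as $\mathbb{F}[X]$-modules, and recognize $Q$ as the norm form of that algebra, which involves only $\chi$ by construction; your fallback via triangularization and symmetric functions is also sound, and your closing observation is sharper than the lemma itself: since $Q(\mathbf{x})=\prod_i\bigl(x_1+x_2\lambda_i+x_3\lambda_i^2\bigr)$ is symmetric in the eigenvalues for \emph{any} $A$, the conclusion actually holds without the independence hypothesis, which the paper's case analysis (restricted to non-derogatory canonical forms) cannot reveal. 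What your argument buys is conceptual economy: no case split, no base change to $\overline{\mathbb{F}}$ in the main route, and a direct anticipation of the norm interpretation that the paper only establishes later in Proposition~\ref{prop1}. What the paper's computation buys is the explicit closed formula \eqref{P(x)} for $Q$ in terms of $c_1,c_2,c_3$, which is not a side remark but is cited repeatedly afterwards (in the group-law formulas, in Proposition~\ref{prop1}, and in Corollary~\ref{cor1}); your argument proves dependence on $\chi$ alone without exhibiting that formula, so to serve the rest of the paper you would still need to carry out the determinant computation in the basis $\{1,X,X^2\}$ once.
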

\begin{proof}
Let $\mathbb{\bar{F}}$ be the algebraic closure of $\mathbb{F}$.
As the endomorphisms $I,A,A^2$ are linearly independent,
the annihilator polynomial of $A$ coincides with $\chi (X)$
by virtue of the Cayley-Hamilton theorem. Hence there exists
a basis of $\mathbb{\bar{F}}^3$ such that the matrix
of $A$ in this basis equals one of the following three matrices:

\begin{equation}\label{matrix1}
M_{1}  =  \left(
\begin{array}{ccc}
\alpha _1 & 0 & 0 \\
0 & \alpha _2 & 0 \\
0 & 0 & \alpha _3
\end{array}
\right),\,
M_2 = \left(
\begin{array}{ccc}
\alpha _1 & 0 & 0 \\
0 & \alpha _2 & 0 \\
0 & 1 & \alpha _2
\end{array}
\right),\,
M_3 = \left(
\begin{array}{ccc}
\alpha _1 & 0 & 0 \\
1 & \alpha _1 & 0 \\
0 & 1 & \alpha _1
\end{array}
\right),
\end{equation}
and from a simple calculation we obtain
\begin{eqnarray}
Q(\mathbf{x}) & = &\det (x_1I+x_2M_i+x_3(M_i)^2)
\label{P(x)} \\
& = &-c_2x_1(x_2)^2
+\left[ (c_2)^2-2(c_1c_3)\right]
x_1(x_3)^2+c_1(x_1)^2x_2
\notag \\
&& +\left[ (c_1)^2+2c_2\right]
(x_1)^2x_3-(c_2c_3)x_2(x_3)^2+(c_1c_3)(x_2)^2x_3
\notag \\
&& -\left( c_1c_2+3c_3\right)
x_1x_2x_3+(x_1)^3+c_3(x_2)^3+(c_3)^2(x_3)^3,
\notag
\end{eqnarray}
for every $i=1,2,3$.
\end{proof}
\begin{theorem}
\label{th1}
Every linear map $A\colon V\to V$ such that the
endomorphisms $I,A,A^2$ are linearly independent,
induces a law of composition
\begin{equation*}
\begin{array}{c}
\oplus \colon V\times V\to V, \\
(\mathbf{x},\mathbf{y})\mapsto \mathbf{z}
=\mathbf{x}\oplus \mathbf{y},
\end{array}
\end{equation*}
by the following formula:
\begin{equation}
\begin{array}{rl}
z_1I+z_2A+z_3A^2= & \left( x_1I+x_2A+x_3A^2\right)
\left(
y_1I+y_2A+y_3A^2\right) ,
\end{array}
\label{law}
\end{equation}
where $\mathbf{x}=(x_1,x_2,x_3)$, $\mathbf{y}=(y_1,y_2,y_3)$,
$\mathbf{z}=(z_1,z_2,z_3)$.

Moreover, the set of elements $\mathbf{x}\in V$ such that
$\mathbf{x}\oplus \mathbf{y}=(0,0,0)$ for some element
$\mathbf{y}$ in $V\setminus \{ (0,0,0)\} $ coincides with
the set $Q^{-1}(0) $, and $\oplus $ induces a group law
\begin{equation*}
\oplus \colon (\mathbb{F}^3\setminus Q^{-1}(0))
\times (\mathbb{F}^3
\setminus Q^{-1}(0))\to (\mathbb{F}^3\setminus Q^{-1}(0)).
\end{equation*}
If $C$ denotes the projective cubic curve defined
by $Q(\mathbf{x})=0$, then the group law $\oplus $ also induces
a group law
\begin{equation*}
\oplus \colon (\mathbb{F}P^2\setminus C)
\times (\mathbb{F}P^2\setminus C)
\to \mathbb{F}P^2\setminus C.
\end{equation*}
\end{theorem}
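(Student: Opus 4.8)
The plan is to transport the entire problem into the commutative subalgebra $\mathbb{F}[A]\subseteq\operatorname{End}(V)$ generated by $A$, where $\oplus$ becomes ordinary multiplication and $Q$ becomes the determinant. Concretely, I would introduce the linear map $\phi\colon V\to\operatorname{End}(V)$, $\phi(\mathbf{x})=x_1I+x_2A+x_3A^2$. The hypothesis that $I,A,A^2$ are linearly independent makes $\phi$ injective, and since (by the Cayley--Hamilton argument already used in Lemma~\ref{lemma1}) every power $A^k$ with $k\ge 3$ is an $\mathbb{F}$-combination of $I,A,A^2$, the image of $\phi$ is exactly $\mathbb{F}[A]$. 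Thus $\phi$ is a linear isomorphism $V\xrightarrow{\sim}\mathbb{F}[A]\cong\mathbb{F}[X]/(\chi(X))$ onto a commutative ring. The defining relation~\eqref{law} reads $\phi(\mathbf{z})=\phi(\mathbf{x})\phi(\mathbf{y})$; as $\mathbb{F}[A]$ is closed under multiplication, the product lies in the image of $\phi$, so $\mathbf{z}$ exists and is unique. This shows at once that $\oplus$ is a well-defined law of composition which, transported through $\phi$, coincides with the commutative, associative multiplication of $\mathbb{F}[A]$, with unit $(1,0,0)=\phi^{-1}(I)$.

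The second ingredient is the identity $Q(\mathbf{x}\oplus\mathbf{y})=Q(\mathbf{x})\,Q(\mathbf{y})$, immediate from the multiplicativity of the determinant applied to~\eqref{law}: $Q(\mathbf{z})=\det\phi(\mathbf{z})=\det\phi(\mathbf{x})\det\phi(\mathbf{y})$. Since $\mathbb{F}$ has no zero divisors, this already shows $\mathbb{F}^3\setminus Q^{-1}(0)$ is closed under $\oplus$. To identify the exceptional set I would argue: if $Q(\mathbf{x})\neq 0$ then $\phi(\mathbf{x})$ is invertible in $\operatorname{End}(V)$, hence is not a zero divisor in $\mathbb{F}[A]$, and because $\mathbb{F}[A]$ is a finite-dimensional commutative algebra a non-zero-divisor is a unit, so its inverse lies in $\mathbb{F}[A]$ and $\mathbf{x}$ has an $\oplus$-inverse and cannot satisfy $\mathbf{x}\oplus\mathbf{y}=0$ with $\mathbf{y}\neq 0$. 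Conversely, if $Q(\mathbf{x})=0$ then $\phi(\mathbf{x})$ is a non-unit of $\mathbb{F}[A]$, hence a zero divisor there, so there is a nonzero $\phi(\mathbf{y})\in\mathbb{F}[A]$ with $\phi(\mathbf{x})\phi(\mathbf{y})=0$, i.e.\ $\mathbf{x}\oplus\mathbf{y}=(0,0,0)$. This yields the claimed equality of sets and, with closure and the existence of inverses, proves that $\bigl(\mathbb{F}^3\setminus Q^{-1}(0),\oplus\bigr)$ is an abelian group, indeed isomorphic via $\phi$ to the unit group $\mathbb{F}[A]^{\ast}$.

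For the projective statement I would verify equivariance under the diagonal scaling: linearity of $\phi$ gives $\phi(\lambda\mathbf{x})=\lambda\phi(\mathbf{x})$, whence $(\lambda\mathbf{x})\oplus(\mu\mathbf{y})=\lambda\mu\,(\mathbf{x}\oplus\mathbf{y})$ for all $\lambda,\mu\in\mathbb{F}^{\ast}$. As $Q$ is homogeneous of degree $3$, the cone $Q^{-1}(0)$ is stable under scaling and defines the projective cubic $C$, and the displayed relation shows $\oplus$ descends to a well-defined operation on the quotient $(\mathbb{F}^3\setminus Q^{-1}(0))/\mathbb{F}^{\ast}=\mathbb{F}P^2\setminus C$. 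The group axioms pass to the quotient automatically, producing the abelian group structure on $\mathbb{F}P^2\setminus C$, namely $\mathbb{F}[A]^{\ast}/\mathbb{F}^{\ast}$.

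I expect the only genuinely delicate point to be the converse direction in the identification of $Q^{-1}(0)$: one must ensure that the failure of invertibility of $\phi(\mathbf{x})$ as an endomorphism forces a \emph{partner lying inside} $\mathbb{F}[A]$, and dually that the inverse of an invertible $\phi(\mathbf{x})$ is again a polynomial in $A$. This is exactly where the finite-dimensionality (Artinian character) of the commutative algebra $\mathbb{F}[A]$ is essential, since an abstract zero divisor living only in $\operatorname{End}(V)$ need not be of the form $\phi(\mathbf{y})$. Everything else reduces to the multiplicativity of the determinant or to the formal transport of the ring structure of $\mathbb{F}[A]$ through the isomorphism $\phi$.
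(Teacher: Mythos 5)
Your proof is correct, and although it rests on the same underlying observation as the paper's proof --- that \eqref{law} is ordinary multiplication in the commutative algebra $\mathbb{F}[A]\cong\mathbb{F}[X]/(\chi)$, pulled back through the linear isomorphism $\phi$ --- it settles the two delicate points by a genuinely different argument. The paper is computational throughout: it uses $A^3=c_1A^2+c_2A+c_3I$ to derive the explicit coordinate formulas \eqref{z's}, verifies by computation that the determinant of the linear system $\mathbf{y}\mapsto\mathbf{x}\oplus\mathbf{y}$ equals $Q(\mathbf{x})$ (this is what identifies the exceptional set), and writes down explicit rational formulas, with denominator $Q(\mathbf{x})$, for the inverse. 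You replace both steps by the Artinian dichotomy in the finite-dimensional commutative algebra $\mathbb{F}[A]$: every non-zero-divisor is a unit and every non-unit is a zero divisor, which simultaneously identifies $Q^{-1}(0)$ as the set of elements admitting a nonzero annihilating partner \emph{inside} $\mathbb{F}[A]$ and produces inverses lying in $\mathbb{F}[A]$ --- and you rightly single this out as the one place where finite-dimensionality is indispensable. Your route is more conceptual: it needs no computation, it sidesteps the coincidence (checked by hand in the paper, and essentially the norm identity of Proposition~\ref{prop1}) that the determinant of the regular representation of $\phi(\mathbf{x})$ on $\mathbb{F}[A]$ equals $\det\phi(\mathbf{x})$ on $V$, and it exhibits the two groups intrinsically as $\mathbb{F}[A]^{\ast}$ and $\mathbb{F}[A]^{\ast}/\mathbb{F}^{\ast}$, which in turn makes the cyclicity of Theorem~\ref{th4} and the DLP equivalence of Proposition~\ref{prop2} nearly immediate. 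What the paper's route buys, and yours does not, are precisely the explicit formulas \eqref{z's} and the explicit inverse, which the later sections rely on (the remark that the law has no exceptional cases, and the operation count in subsection~\ref{cost}); so for the intended cryptographic application the computation cannot ultimately be dispensed with. The only cosmetic omission in your write-up is the observation that $Q(1,0,0)=\det I=1\neq 0$, so the unit $(1,0,0)$ indeed lies in $\mathbb{F}^3\setminus Q^{-1}(0)$; in your framework this is automatic, since $I\in\mathbb{F}[A]^{\ast}$.
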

\label{th2}
\begin{proof}
As $A^3=c_1A^2+c_2A+c_3I$, and
\begin{align*}
A^2\cdot A^2& =A\cdot A^3 \\
& =\left( c_1c_3\right) I+\left( c_1c_2+c_3\right) A
+\left[ (c_1)^2+c_2\right] A^2,
\end{align*}
from the formula in \eqref{law} it follows:
\begin{equation}
\begin{array}{rl}
z_1= & x_1y_1+c_3\left( x_2y_3+x_3y_2\right)
+\left(
c_1c_3\right) x_3y_3, \\
z_2= & x_1y_2+x_2y_1+c_2\left( x_2y_3+x_3y_2\right)
+\left( c_1c_2+c_3\right) x_3y_3, \\
z_3= & x_2y_2+x_1y_3+x_3y_1+c_1\left(
x_2y_3+x_3y_2\right) +\left( (c_1)^2+c_2\right) x_3y_3.
\end{array}
\label{z's}
\end{equation}
In matrix notation, these formulas can equivalently
be written as
\begin{equation*}
\left(
\begin{array}{c}
z_1 \\
z_2 \\
z_3
\end{array}
\right) =\left(
\begin{array}{ccc}
x_1 & c_3x_3 & c_1c_3x_3+c_3x_2 \\
x_2 & x_1+c_2x_3 & c_2x_2+c_3x_3+c_1c_2x_3 \\
x_3 & x_2+c_1x_3 & x_1+(c_1)^2x_3+c_1x_2+c_2x_3
\end{array}
\right) \left(
\begin{array}{c}
y_1 \\
y_2 \\
y_3
\end{array}
\right) ,
\end{equation*}
and as a simple computation shows, the determinant
of the linear system above is equal to $Q(\mathbf{x})$,
where $Q$ is defined by the formula \eqref{P(x)}.
Hence $\mathbf{x}\oplus \mathbf{y}=(0,0,0)$, for some
$\mathbf{y}$ in $V\setminus \{ (0,0,0)\} $, if and only if
$Q(\mathbf{x})=0$.

The commutativity of $\oplus $ is a direct consequence
of the invariance of the formula \eqref{z's} under
the substitutions $x_i\mapsto y_i$, $y_i\mapsto x_i$,
$1\leq i\leq3$.

Moreover, the formula \eqref{law} can also be written as follows:
\begin{multline*}
\left( \mathbf{x}\oplus \mathbf{y}\right) _1I
+\left( \mathbf{x}\oplus \mathbf{y}\right) _2A
+\left( \mathbf{x}\oplus \mathbf{y}\right) _3A^2 =\\
\left( x_1I+x_2A+x_3A^2\right) \left( y_1I+y_2A+y_3A^2\right) .
\end{multline*}
From the associativity of the composition law of endomorphisms we deduce
\[
\begin{array}[c]{l}
\left( \mathbf{x}\oplus (\mathbf{y}\oplus \mathbf{z})\right) _1I
+\left( \mathbf{x}\oplus (\mathbf{y}\oplus \mathbf{z})\right) _2
A+\left( \mathbf{x}\oplus (\mathbf{y}\oplus \mathbf{z})\right) _3A^2
\medskip\\
=\left( x_1I+x_2A+x_3A^2\right) \cdot \left( \left( y_1I+y_2A+y_3A^2\right)
\cdot \left( z_1I+z_2A+z_3A^2\right) \right)
\medskip\\
=\left( \left( x_1I+x_2A+x_3A^2\right) \cdot
\left( y_1I+y_2A+y_3A^2\right) \right) \cdot
\left( z_1I+z_2A+z_3A^2\right)
\medskip\\
=\left( (\mathbf{x}\oplus \mathbf{y})\oplus \mathbf{z}\right) _1
I+\left( (\mathbf{x}\oplus \mathbf{y})\oplus \mathbf{z}\right) _2A
+\left( (\mathbf{x}\oplus \mathbf{y})\oplus \mathbf{z}\right) _3A^2.
\end{array}
\]
Hence $\mathbf{x}\oplus (\mathbf{y}\oplus \mathbf{z})
=(\mathbf{x}\oplus \mathbf{y})\oplus \mathbf{z}$,
$\forall \mathbf{x},\mathbf{y},\mathbf{z}\in V$.

From \eqref{z's} it follows that the unit
element is the point $(1,0,0)$, which does not
belong to $Q^{-1}(0)$ since $Q(1,0,0)=1$.

By taking determinants in the equation \eqref{law}
we obtain
\begin{equation*}
\begin{array}{lll}
Q(\mathbf{x}\oplus \mathbf{y})= & Q(\mathbf{x})Q(\mathbf{y}),
& \forall \mathbf{x},\mathbf{y}\in V.
\end{array}
\end{equation*}
Therefore the opposite element $\mathbf{y}$ of $\mathbf{x}$ exists
and it is given by the following formulas:
\begin{equation*}
\begin{array}{rl}
y_1=\! & \!\! \tfrac{c_1x_1x_2+\left[ (c_1)^2+2c_2\right]
x_1x_3-\left( c_3+c_1c_2\right)
x_2x_3+(x_1)^2-c_2(x_2)^2+\left[ (c_2)^2-c_1c_3\right]
(x_3)^2}{Q(\mathbf{x})},\smallskip \\
y_2=\! & \!\! - \tfrac{x_1x_2+(c_1)^2x_2x_3+c_1(x_2)^2
-\left( c_1c_2+c_3\right) (x_3)^2}{Q(\mathbf{x})},
\smallskip \\
y_3=\! & \!\!
\tfrac{-x_1x_3+c_1x_2x_3+(x_2)^2-c_2(x_3)^2}{Q(\mathbf{x})}.
\end{array}
\end{equation*}

Finally, if $\mathbf{x}$, $\mathbf{y}$ are replaced by
$\lambda \mathbf{x}$, $\mu \mathbf{y}$,
respectively, with $\lambda ,\mu \in \mathbb{F}^\ast $,
then $\mathbf{z}$ transforms into $\lambda \mu \mathbf{z}$,
thus proving that the group law projects onto
$\mathbb{F}P^2\setminus C$.
\end{proof}
\begin{remark}
Note that the equations in~\eqref{z's}, allowing one to compute
the $\oplus $ group operation in terms of the coefficients
in the ground field, are applicable to any element of the group,
with no exception at all.
\end{remark}
\begin{remark}
If $\mathbf{v}_1=(1,0,0)$, $\mathbf{v}_2=(0,1,0)$, $\mathbf{v}_3=(0,0,1)$,
then from \eqref{P(x)} we obtain $Q(\mathbf{v}_2)=c_3$, $Q(\mathbf{v}_3)=(c_3)^2$.
Hence $\mathbf{v}_2$ and $\mathbf{v}_3$ belong to
$\mathbb{F}^3\setminus Q^{-1}(0)$ if and only if $c_3\neq 0$,
i.e., when $A$ is invertible.
\end{remark}
\subsection{The basic cubic}
\begin{proposition}
\label{prop1}
Let $\chi (X)=X^3-c_1X^2-c_2X-c_3\in \mathbb{F}[X]$
be the polynomial introduced in \emph{Lemma~\ref{lemma1}}
and let $\alpha =X\bmod \chi $.
If $N\colon \mathbb{F}[\alpha ]\to \mathbb{F}$
is the norm of the extension
$\mathbb{F}[\alpha ]$ of $\mathbb{F}$,
then a point $\beta =\beta _0+\beta _1\alpha +\beta _2\alpha ^2$
belongs to the cubic curve $C$ defined in
\emph{Theorem~\ref{th1}} if and only if $N(\beta )=0$.
In particular, if $\chi $ is irreducible in $\mathbb{F}[X]$,
then $C$ has no point in $\mathbb{F}P^2$.

Moreover, the polynomial $\chi $ is irreducible
 in $\mathbb{F}[X]$ if and only if the cubic $C$ is irreducible.
\end{proposition}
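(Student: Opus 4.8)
The plan is to reduce every assertion to the single identity $N(\beta)=Q(\beta_0,\beta_1,\beta_2)$ for $\beta=\beta_0+\beta_1\alpha+\beta_2\alpha^2$. To establish it I would realize the norm concretely: take $A$ to be the companion matrix of $\chi$, that is, the matrix of multiplication by $\alpha$ on $\mathbb{F}[\alpha]$ in the basis $\{1,\alpha,\alpha^2\}$. Since $\chi$ has degree $3$ and is the minimal polynomial of this $A$, the endomorphisms $I,A,A^2$ are linearly independent, so Lemma~\ref{lemma1} applies and the associated cubic is exactly the $Q$ of Theorem~\ref{th1}. Because multiplication by $\beta$ equals $\beta_0I+\beta_1A+\beta_2A^2$, the definition of the norm as the determinant of the multiplication map yields $N(\beta)=\det(\beta_0I+\beta_1A+\beta_2A^2)=Q(\beta)$. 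The first assertion, that $\beta\in C$ if and only if $N(\beta)=0$, is then immediate from the defining equation $Q=0$ of $C$.

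For the \emph{in particular} clause I would invoke multiplicativity of the norm. If $\chi$ is irreducible then $\mathbb{F}[\alpha]$ is a field, so every nonzero $\beta$ is invertible and $N(\beta)N(\beta^{-1})=N(1)=1$ forces $N(\beta)\neq0$. A point of $\mathbb{F}P^2$ corresponds to a nonzero $\beta$, hence $Q(\beta)=N(\beta)\neq0$ and $C\cap\mathbb{F}P^2=\emptyset$.

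The core of the proposition is the last equivalence, and the key tool is the factorization of $Q$ over the algebraic closure. Passing to $\bar{\mathbb{F}}$, the eigenvalues of $A$ are the roots $\alpha_1,\alpha_2,\alpha_3$ of $\chi$ (with multiplicity), so the eigenvalues of $\beta_0I+\beta_1A+\beta_2A^2$ are $\beta_0+\beta_1\alpha_i+\beta_2\alpha_i^2$, and taking determinants gives
\[
Q(\beta)=\prod_{i=1}^{3}\left(\beta_0+\beta_1\alpha_i+\beta_2\alpha_i^2\right),
\]
a product of three linear forms $L_i$. A ternary cubic form is reducible exactly when it admits a linear factor, so I would prove the clean dictionary: $Q$ has a linear factor over $\mathbb{F}$ if and only if $\chi$ has a root in $\mathbb{F}$. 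One direction is the observation that a root $a\in\mathbb{F}$ makes $L_a=\beta_0+a\beta_1+a^2\beta_2$ a factor of $Q$ over $\bar{\mathbb{F}}$, whence $L_a\mid Q$ already over $\mathbb{F}$ by unique factorization (Gauss). For the converse, any linear factor of $Q$ over $\mathbb{F}$ is, over $\bar{\mathbb{F}}$, proportional to one of the $L_i$; normalizing the coefficient of $\beta_0$ (which cannot vanish) then forces the corresponding $\alpha_i\in\mathbb{F}$. Since a cubic $\chi$ is reducible if and only if it has a root in $\mathbb{F}$, the two reducibility statements coincide, and this gives $\chi$ irreducible $\iff$ $C$ irreducible.

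The main obstacle I anticipate is precisely the converse half of this dictionary: one must argue that a linear factor defined over $\mathbb{F}$ really comes from a single $L_i$ with $\alpha_i\in\mathbb{F}$, and do so uniformly whether or not $\chi$ is separable. The delicate case is the inseparable one (characteristic $3$, $\chi=X^3-c$), where $Q=L_\gamma^3$ is a genuine cube over $\bar{\mathbb{F}}$ yet remains irreducible over $\mathbb{F}$ because $\gamma\notin\mathbb{F}$. Tracking the factorization through $\bar{\mathbb{F}}$ and appealing to unique factorization in $\bar{\mathbb{F}}[\beta_0,\beta_1,\beta_2]$ is what keeps the argument valid in every case; the separable subcase, where the $L_i$ are distinct and $C$ is geometrically a triangle of lines, is then just an instance of the same reasoning.
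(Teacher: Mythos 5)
Your proof is correct, and for the first two assertions it is essentially the paper's own argument: both identify $N(\beta)$ with $\det(\beta_0I+\beta_1A+\beta_2A^2)=Q(\beta_0,\beta_1,\beta_2)$ by viewing the norm as the determinant of the multiplication-by-$\beta$ endomorphism, and both deduce the \emph{in particular} clause from the fact that a nonzero element of the field $\mathbb{F}[\alpha]$ has nonzero norm (your formulation via $N(\beta)N(\beta^{-1})=1$ is a more precise rendering of what the paper compresses into ``the norm is injective''). Where you genuinely diverge is the final equivalence. The paper proves ``$\chi$ reducible $\Rightarrow Q$ reducible'' by writing $\chi=(X-h)(X^2+kX+l)$ and exhibiting an explicit factorization of $Q$ with linear factor $x_1+hx_2+h^2x_3$, and it proves the converse without ever factoring $Q$ over $\bar{\mathbb{F}}$: if $\chi$ is irreducible then, by the part already established, $Q(\mathbf{x})=0$ has only the trivial solution in $\mathbb{F}^3$, whereas any reducible ternary cubic has a linear factor over $\mathbb{F}$ and hence a whole plane of nontrivial zeros. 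Your route instead builds the dictionary ``linear factors of $Q$ over $\mathbb{F}$ correspond to roots of $\chi$ in $\mathbb{F}$'' through the splitting $Q=\prod_{i=1}^{3}\left(\beta_0+\beta_1\alpha_i+\beta_2\alpha_i^2\right)$ over $\bar{\mathbb{F}}$, unique factorization in $\bar{\mathbb{F}}[\beta_0,\beta_1,\beta_2]$, and descent of divisibility to $\mathbb{F}$. What your version buys is a uniform geometric picture --- over $\bar{\mathbb{F}}$ the cubic $C$ is a union of three lines, conjugate in the separable case and a triple line in the inseparable characteristic-$3$ case $\chi=X^3-c_3$, which you treat explicitly while the paper's zero-set argument covers it only silently. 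What the paper's version buys is economy: its converse needs nothing beyond the elementary fact that a nonzero linear form on $\mathbb{F}^3$ vanishes on a two-dimensional subspace, so no appeal to the algebraic closure, UFD properties, or descent is required.
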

\begin{proof}
Every $\beta \in \mathbb{F}[\alpha ]$ induces
an $\mathbb{F}$-linear endomorphism
$E_\beta \colon \mathbb{F}[\alpha ]\to \mathbb{F}[\alpha ]$
given by
$E_\beta (\xi )=\beta \cdot \xi$,
$\forall \xi \in \mathbb{F}[\alpha ]$,
and from the very definition of the norm we have
$N(\beta )=\det E_\beta $. As a computation shows,
we obtain $N(\beta )=Q(\beta _0,\beta _1,\beta _2)$,
thus proving the first part of the statement.
Moreover, $\chi $ is irreducible if and only if
$\mathbb{F}[\alpha ]$ is a field and then the norm is injective,
thus proving the second part of the statement.

Finally, if $\chi $ factors in $\mathbb{F}[X]$, say
$X^3-c_1X^2-c_2X-c_3=(X-h)(X^2+kX+l)$, with
$h,k,l\in \mathbb{F}$, then we have
\[
Q(\mathbf{x})=[(x_1)^2+(k^2-2l)x_1x_3+l(x_2)^2-klx_2x_3+l^2(x_3)^2-kx_1x_2]
[x_1+hx_2+h^2x_3].
\]

Conversely, if $\chi $ is irreducible in $\mathbb{F}[X]$,
then according to Proposition~\ref{prop1}, the only solution
to the cubic equation $Q(\mathbf{x})=0$ is $\mathbf{x}=\mathbf{0}$.
Hence $Q$ must be irreducible, as a reducible cubic admits
non-trivial solutions in the ground field.
\end{proof}
\begin{corollary}
\label{cor1}
If the characteristic polynomial $\chi $ of $A$ is irreducible
in $\mathbb{F}[X]$, then there is no linear transformation
$(\lambda _{ij})_{i,j=1}^3\in GL(\mathbb{F},3)$ reducing
the polynomial $Q$ defined in \eqref{P(x)} to Weierstrass form.
\end{corollary}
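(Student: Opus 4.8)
The plan is to derive a contradiction from the existence of such a transformation by comparing the sets of $\mathbb{F}$-rational points of $C$ and of a cubic in Weierstrass form. The single fact that does all the work is Proposition~\ref{prop1}: when $\chi$ is irreducible, the cubic $C$ has \emph{no} point in $\mathbb{F}P^2$. A cubic in Weierstrass form, on the contrary, always carries at least one $\mathbb{F}$-rational point, and this mismatch is impossible to reconcile through an $\mathbb{F}$-linear change of coordinates.

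First I would recall the shape of a Weierstrass cubic, namely the homogeneous polynomial
\[
W(X,Y,Z)=Y^2Z+a_1XYZ+a_3YZ^2-X^3-a_2X^2Z-a_4XZ^2-a_6Z^3,
\]
with $a_1,\dots,a_6\in\mathbb{F}$, and observe that $W(0,1,0)=0$; thus the point at infinity $[0:1:0]$ lies on the curve $W=0$ and is manifestly $\mathbb{F}$-rational. This holds whether the Weierstrass cubic is smooth or singular, so no genericity hypothesis on the $a_i$ is needed.

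Next I would exploit the fact that a matrix $\Lambda=(\lambda_{ij})\in GL(\mathbb{F},3)$ induces a bijection of $\mathbb{F}P^2$ onto itself carrying $\mathbb{F}$-rational points to $\mathbb{F}$-rational points, precisely because its entries lie in $\mathbb{F}$. Suppose, for a contradiction, that $\Lambda$ reduces $Q$ to Weierstrass form, i.e. that $Q(\Lambda\mathbf{x})=\kappa\,W(\mathbf{x})$ for some $\kappa\in\mathbb{F}^\ast$. Then, evaluating at $\mathbf{p}=(0,1,0)^{\top}$ where $W(\mathbf{p})=0$, we get $Q(\Lambda\mathbf{p})=0$, so the $\mathbb{F}$-rational point $\Lambda\mathbf{p}$ lies on $C$. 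This contradicts Proposition~\ref{prop1}, which asserts that $C$ has no point in $\mathbb{F}P^2$ when $\chi$ is irreducible, and the corollary follows.

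The only delicate point is conceptual rather than computational: one must fix the convention that ``reduction to Weierstrass form'' means an $\mathbb{F}$-linear (equivalently, $\mathbb{F}$-rational projective) change of coordinates, for it is exactly the $\mathbb{F}$-rationality of $\Lambda$ that transports the always-present rational point $[0:1:0]$ of the Weierstrass model back to a rational point of $C$. Were transformations over a proper extension of $\mathbb{F}$ admitted, the argument would collapse, since over $\mathbb{\bar F}$ the curve $C$ does acquire points.
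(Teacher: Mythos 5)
Your proof is correct and is essentially the paper's own argument in geometric guise: the paper extracts the coefficient of the forbidden cube monomial in the transformed cubic, which equals $Q$ evaluated at a nonzero row of the matrix $(\lambda_{ij})$ and hence cannot vanish by Proposition~\ref{prop1}, and this is exactly your observation that $W(0,1,0)=0$ forces the $\mathbb{F}$-rational point $\Lambda\cdot(0,1,0)$ onto $C$, since evaluating a cubic form at a coordinate point is the same as reading off the corresponding cube coefficient. Both arguments hinge on the same application of Proposition~\ref{prop1}, so there is nothing to add.
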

\begin{proof}
Replacing $x_j$ by $X_j=\sum _{i=1}^3\lambda _{ij}x_i$,
$1\leq j\leq 3$, in \eqref{P(x)} we obtain a cubic $\bar{Q}$,
which is in Weierstrass form (see \cite[\S 2.1]{Menezes})
if and only if the coefficients $a$, $b$, and $c$
of the terms $(x_3)^3$, $(x_1)^2x_2$, and
$x_1(x_2)^2$, respectively, vanish. As a computation shows,
we have
$a=\bar{Q}(\lambda _{31},\lambda _{32},\lambda _{33})$,
and we can conclude by applying Proposition~\ref{prop1}.
\end{proof}
\subsection{Cyclicity}
\begin{theorem}
\label{th4}
If $\mathbb{F}_q$ is a finite field of characteristic
different from $2$ or $3$ and the polynomial
$\chi (X)=X^3-c_1X^2-c_2X-c_3$ introduced
in \emph{Lemma~\ref{lemma1}} is irreducible in
$\mathbb{F}_q[X]$, then the group $\mathbb{G}=(\mathbb{F}_qP^2,
\oplus)$ is cyclic.
\end{theorem}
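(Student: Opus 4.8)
The plan is to identify the group $\mathbb{G}$ with the quotient of the multiplicative group of the cubic extension field by its scalar subfield, and then invoke the elementary fact that every quotient of a cyclic group is cyclic.

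First, because $\chi$ is irreducible of degree $3$ it is the minimal polynomial of $A$, so the endomorphisms $I,A,A^2$ are linearly independent and the commutative $\mathbb{F}_q$-algebra $\mathbb{F}_q[A]$ they span is isomorphic to $\mathbb{F}_q[X]/(\chi)=\mathbb{F}_q[\alpha]$, which is the field $\mathbb{F}_{q^3}$. Under the identification $A\leftrightarrow\alpha$, the defining formula \eqref{law} says precisely that the map $\phi\colon V\setminus\{0\}\to(\mathbb{F}_{q^3})^\ast$, $\phi(x_1,x_2,x_3)=x_1+x_2\alpha+x_3\alpha^2$, satisfies $\phi(\mathbf{x}\oplus\mathbf{y})=\phi(\mathbf{x})\phi(\mathbf{y})$. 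By Proposition~\ref{prop1} the irreducibility of $\chi$ forces $Q^{-1}(0)=\{0\}$ in $V$, so $\oplus$ is everywhere defined on $V\setminus\{0\}$ and $\phi$ is a bijection; hence $\phi$ is an isomorphism of groups $(V\setminus\{0\},\oplus)\cong(\mathbb{F}_{q^3})^\ast$.

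Next I would push $\phi$ through the diagonal action. The action $\lambda\cdot(x_1,x_2,x_3)=(\lambda x_1,\lambda x_2,\lambda x_3)$ corresponds under $\phi$ to multiplication by the scalar $\lambda\in\mathbb{F}_q^\ast\subseteq(\mathbb{F}_{q^3})^\ast$, so $\phi$ carries the $\mathbb{F}_q^\ast$-orbits of $V\setminus\{0\}$ onto the cosets of the subgroup $\mathbb{F}_q^\ast$ of $(\mathbb{F}_{q^3})^\ast$. Since $(\mathbb{F}_{q^3})^\ast$ is abelian this subgroup is normal, and passing to orbits yields a group isomorphism
\[
\mathbb{G}=(\mathbb{F}_qP^2,\oplus)=(V\setminus\{0\})/\mathbb{F}_q^\ast\;\cong\;(\mathbb{F}_{q^3})^\ast/\mathbb{F}_q^\ast.
\]

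It then suffices to recall that $(\mathbb{F}_{q^3})^\ast$ is cyclic of order $q^3-1$ and that every quotient of a cyclic group is cyclic; therefore $\mathbb{G}$ is cyclic, necessarily of order $(q^3-1)/(q-1)=q^2+q+1$. I do not anticipate a genuine obstacle: the only points requiring care are that $\phi$ is multiplicative, which is immediate from \eqref{law} since $\oplus$ is by construction the product in $\mathbb{F}_q[A]$, and that the domain of $\phi$ is all of $V\setminus\{0\}$, which is exactly what Proposition~\ref{prop1} provides. It is perhaps worth noting that this argument rests solely on the quotient-of-cyclic structure and so does not actually invoke the hypothesis that the characteristic of $\mathbb{F}_q$ differs from $2$ and $3$; that restriction is inherited from the surrounding development rather than being needed for cyclicity itself.
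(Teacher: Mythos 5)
Your proof is correct, and it takes a genuinely different, more structural route than the paper's. You obtain the isomorphism $(V\setminus\{0\},\oplus)\cong(\mathbb{F}_{q^3})^\ast$ abstractly, from $\mathbb{F}_q[A]\cong\mathbb{F}_q[X]/(\chi)\cong\mathbb{F}_{q^3}$, and then present $\mathbb{G}$ as the quotient $(\mathbb{F}_{q^3})^\ast/\mathbb{F}_q^\ast$, so cyclicity is just the fact that a quotient of a cyclic group is cyclic. The paper instead passes to the splitting field $\mathbb{F}_q^\prime$, diagonalizes $A$ as in \eqref{matrix1} (invoking separability), and uses the Galois automorphisms $\sigma_2,\sigma_3$ to compute $(\beta_1I+\beta_2A+\beta_3A^2)^n=\operatorname{diag}\left(\beta^n,\sigma_2(\beta^n),\sigma_3(\beta^n)\right)$; from this it concludes that any generator $\beta$ of $(\mathbb{F}_{q^3})^\ast$ projects to a generator of $\mathbb{G}$. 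Both arguments rest on the same identification of $\oplus$ with multiplication in the cubic extension, but they buy different things. Yours is shorter, needs no Galois theory, and, as you correctly observe, never uses $\operatorname{char}\mathbb{F}_q\neq 2,3$ (the paper's appeal to that hypothesis for separability is in fact superfluous, since every irreducible polynomial over a finite field is separable); moreover, the presentation $\mathbb{G}\cong(\mathbb{F}_{q^3})^\ast/\mathbb{F}_q^\ast$ makes Remark~\ref{rem1} transparent, because a generator of a quotient need not lift to a generator upstairs. What the paper's computation buys is the explicit diagonal power formula, which is precisely what the proof of Proposition~\ref{prop2} invokes to establish the two-way equivalence between the DLP in $\mathbb{G}$ and in $(\mathbb{F}_{q^3})^\ast$ --- the remark following Theorem~\ref{th4} states explicitly that a direct proof is wanted for that later use. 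If your argument were to replace the paper's, you would only need to add the one-line observation that your isomorphism transports exponentiation, i.e.\ $\phi([n]\mathbf{x})=\phi(\mathbf{x})^n$, so it serves the cryptanalytic purpose equally well.
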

\begin{proof}
Since $\operatorname{char}\mathbb{F}_q\neq 2,3$,
the polynomial $\chi $ is separable and in its splitting field
$\mathbb{F}_q^\prime $ we have
$\chi (X) =(X-\alpha _1)(X-\alpha _2)(X-\alpha _3)$,
the roots $\alpha _1$, $\alpha _2$, $\alpha _3$ being pairwise
distinct, and in a certain basis of
$\mathbb{F}_q^\prime \otimes _{\mathbb{F}_q}V$
the matrix of $A$ is given by the formula \eqref{matrix1}.
As the Galois group $G(\mathbb{F}_q^\prime /\mathbb{F}_q)$
acts transitively on the roots of $\chi $, there exist
two automorphisms such that $\sigma _2(\alpha _1)=\alpha _2$
and $\sigma _3(\alpha _1)=\alpha _3$. If
$\beta =\beta _1+\beta _2\alpha _1+\beta _3(\alpha _1)^2$,
$\beta _i\in \mathbb{F}_q$, $1\leq i\leq 3$, is an element in
$\mathbb{F}_q[\alpha _1]\cong \mathbb{F}_{q^3}$, then
for every positive integer $n$ we have
\[
\left( \beta _1I+\beta _2A+\beta _3A^2\right) ^n
=\left(
\begin{array}{ccc}
\beta ^n & 0 & 0 \\
0 & \sigma _2\left( \beta ^n\right)  & 0 \\
0 & 0 & \sigma _3\left( \beta ^n\right)
\end{array}
\right) .
\]
Consequently, if $\beta $ is a generator
of the multiplicative group $(\mathbb{F}_{q^3})^\ast $,
then the vector $(\beta _1,\beta _2,\beta _3)$ generates
the group
$((\mathbb{F}_q)^3\setminus \{ (0,0,0)\} ,\oplus)$
and its corresponding projective point
$[\beta _1,\beta _2,\beta _3]
=(\beta _1,\beta _2,\beta _3)\bmod \mathbb{F}_q^\ast $
generates the group $\mathbb{G}$, with
$\mathbb{F}_qP^2
=\left( (\mathbb{F}_q)^3\setminus \{ (0,0,0)\} \right)/\mathbb{F}_q^\ast $.
\end{proof}

\begin{remark}\label{rem1}
It is important to keep in mind that the implication in Theorem~\ref{th4}
works only in the way in which it is worded. If one selects a generator of the
group $\mathbb{G}$, it will in general be a
generator of only a subgroup of the whole $(\mathbb{F}_{q^3})^\ast $ group.
Consequently, when choosing a generator for $\mathbb{G}$, it is
convenient to pick it from the set of generators in $(\mathbb{F}_{q^3})^\ast $
and, \emph{after that}, project it onto $\mathbb{F}_qP^2$.
\end{remark}

\begin{remark}
As the order of the group $\mathbb{G}=(\mathbb{F}_qP^2,\oplus)$ is $q^2+q+1$,
the statement of Theorem~\ref{th4} means that there exists an element
$\beta \in \mathbb{G}$ of order $q^2+q+1$. According to the proof
of Theorem~\ref{th4} this is equivalent to saying that the matrix $A$ in
\eqref{matrix1} is of order $q^2+q+1$ in the linear group $GL(\mathbb{F}_q,3)$.
A classical result (see \cite[Theorem, p.\ 379]{Singer}) states
that such a collineation always exists, but we need a direct proof
of this fact to be able to apply it below in section~\ref{cryptanalysis};
also see \cite[Proposition 2.1]{GHK}.
\end{remark}

\begin{remark}
When the polynomial $\chi $ is reducible, experimental tests carried out in the
prime field $\mathbb{F}_p$ show that the projective cubic curve $C$ defined as
$Q(\mathbf{x}) = 0$ has a number of points from the set $\{ p+2, 2p+1, 3p, p+1\} $ only.

Since the projective space $\mathbb{F} _pP^2$ has a total of $p^2+p+1$ points,
the group $(\mathbb{F} _pP^2\setminus C,\oplus)$ is left,
respectively, with $\{p^2-1, p^2-p, (p-1)^2, p^2\}$ points.

If the number of points of $C$ is either $p+2$ or $2p+1$, then the group
$(\mathbb{F} _pP^2\setminus C,\oplus)$ is still cyclic, and has the
expected number of generators, namely, either $\varphi(p^2-1)$ or $\varphi(p^2-p)$,
respectively, where $\varphi$ is Euler's totient function.

However none of the other two possibilities give rise to a cyclic group.
Rather, for the case where $C$ has $3p$ points, there appears a number of cyclic
groups, whose cardinalities are the divisors of $p-1$;
it is important to remark that the total
number of points left for the group is precisely $(p-1)^2$. Thus, the group
$(\mathbb{F} _pP^2\setminus C,\oplus)$ can be decomposed as a direct sum of a
number of cyclic groups such that the product of their cardinalities is
$(p-1)^2$.

As for the case when $C$ has $p+1$ points, the group $(\mathbb{F} _pP^2
\setminus C,\oplus)$ is not cyclic either and can be decomposed as a direct sum
of $2$ cyclic groups with $p$ points each. Remark that now the total number of
points left for the group is $p^2$, so again the numbers of points of the cyclic
groups of this case match the divisors of $p$.
\end{remark}
\section{A cryptographic protocol}\label{S:Crypto}
First of all, we establish the computational security of the
mathematical problem defined over the cyclic group considered.
Later on, as an example of cryptographical protocol,
we present a Diffie-Hellman-like key agreement protocol.

\subsection{Equivalence of DLP in $\mathbb{G}$ and
$(\mathbb{F}_{q^3})^\ast $}
\label{cryptanalysis}
\begin{proposition}
\label{prop2}
Let $\mathbb{F}_q$ be a finite field
of characteristic $\neq 2$ or $3$.
Assume the polynomial $\chi (X)=X^3-c_1X^2-c_2X-c_3$
in \emph{Lemma~\ref{lemma1}} is irreducible in $\mathbb{F}_q[X]$,
and let $\alpha \in \mathbb{F}_{q^3}$ be a root
of $\chi $.

If $(\gamma _1,\gamma _2, \gamma _3)$
is a generator of the group
$((\mathbb{F}_q)^3\setminus \{(0,0,0)\},\oplus)$
and $(\beta _1,\beta _2,\beta _3)$ belongs to this group,
then $n\in \mathbb{N}$ is a solution to the equation
\begin{equation*}
\left( \beta _1,\beta _2,\beta _3\right)
=\left( \gamma _1,\gamma _2,\gamma _3\right)
\oplus \overset{(n}{\ldots }\oplus
\left( \gamma _1,\gamma _2,\gamma _3\right) ,
\end{equation*}
if and only if $n$ is a solution to the equation
$\beta =\gamma ^n$ in the multiplicative group
$(\mathbb{F}_{q^3})^\ast $, where
$\beta =\beta _1 +\beta _2\alpha +\beta _3\alpha ^2$,
and
$\gamma =\gamma _1 +\gamma _2\alpha +\gamma _3\alpha ^2$.

Therefore, the DLP in the group $((\mathbb{F}_q)^3\setminus\{(0,0,0)\},\oplus)$
is equivalent to the DLP in $(\mathbb{F}_{q^3})^\ast $.
\end{proposition}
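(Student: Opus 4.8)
The plan is to prove the proposition by exhibiting an explicit group isomorphism between $((\mathbb{F}_q)^3\setminus\{(0,0,0)\},\oplus)$ and $(\mathbb{F}_{q^3})^\ast$ under which the two discrete-logarithm equations correspond. Since $\chi$ is irreducible, $\mathbb{F}_q[\alpha]=\mathbb{F}_q[X]/(\chi)$ is a field isomorphic to $\mathbb{F}_{q^3}$ and $\{1,\alpha,\alpha^2\}$ is an $\mathbb{F}_q$-basis of it. First I would introduce the $\mathbb{F}_q$-linear map
\[
\Phi\colon (\mathbb{F}_q)^3\to \mathbb{F}_{q^3},\qquad
\Phi(x_1,x_2,x_3)=x_1+x_2\alpha+x_3\alpha^2,
\]
which is a bijection of $\mathbb{F}_q$-vector spaces; because every nonzero element of a field is invertible, it restricts to a bijection $(\mathbb{F}_q)^3\setminus\{(0,0,0)\}\to(\mathbb{F}_{q^3})^\ast$.

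The heart of the argument is to show that $\Phi$ intertwines $\oplus$ with multiplication, i.e. $\Phi(\mathbf{x}\oplus\mathbf{y})=\Phi(\mathbf{x})\cdot\Phi(\mathbf{y})$. I would argue this structurally rather than by brute force. The assignment $\beta\mapsto E_\beta$ sending $\beta\in\mathbb{F}_q[\alpha]$ to its multiplication-by-$\beta$ endomorphism is an injective homomorphism of $\mathbb{F}_q$-algebras, with $E_1=I$, with $E_\alpha$ equal to the companion matrix of $\chi$, and hence $E_{\alpha^2}=E_\alpha^{2}$. By Lemma~\ref{lemma1} and the explicit formulas \eqref{z's}, the operation $\oplus$ depends only on the coefficients $c_1,c_2,c_3$ and not on the particular representative $A$, so there is no loss in taking $A$ to be this companion matrix. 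With that choice $x_1I+x_2A+x_3A^2=E_{\Phi(\mathbf{x})}$, and the defining identity \eqref{law} reads $E_{\Phi(\mathbf{z})}=E_{\Phi(\mathbf{x})}E_{\Phi(\mathbf{y})}=E_{\Phi(\mathbf{x})\cdot\Phi(\mathbf{y})}$; injectivity of $\beta\mapsto E_\beta$ then forces $\Phi(\mathbf{z})=\Phi(\mathbf{x})\cdot\Phi(\mathbf{y})$, which is exactly the homomorphism property. Equivalently, one may simply expand $(x_1+x_2\alpha+x_3\alpha^2)(y_1+y_2\alpha+y_3\alpha^2)$ and reduce with $\alpha^3=c_1\alpha^2+c_2\alpha+c_3$, recovering \eqref{z's} coordinate by coordinate.

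Combining the two steps, $\Phi$ is a group isomorphism. The stated equivalence is then immediate: applying $\Phi$ to $[n](\gamma_1,\gamma_2,\gamma_3)=(\beta_1,\beta_2,\beta_3)$ and using $\Phi\bigl([n](\gamma_1,\gamma_2,\gamma_3)\bigr)=\gamma^{n}$ shows that $n\in\mathbb{N}$ solves the $\oplus$-equation if and only if it solves $\beta=\gamma^{n}$ in $(\mathbb{F}_{q^3})^\ast$; in particular the full solution sets coincide and a generator of one group maps to a generator of the other. To obtain equivalence of the two problems in the computational sense, I would finally observe that both $\Phi$ and $\Phi^{-1}$ are given by an explicit invertible $\mathbb{F}_q$-linear change of coordinates and are therefore computable in polynomial time, so any instance of one DLP transports to an instance of the other with the identical solution $n$.

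I do not expect a deep obstacle: the substance of the proof is the intertwining property of the second paragraph, and the only point demanding care is the reduction to the companion matrix. That reduction is, however, already underwritten by the observation—recorded in \eqref{z's}—that the group law is intrinsic to $\chi$ and independent of the chosen representative $A$, so the isomorphism is genuinely canonical once $\alpha$ is fixed.
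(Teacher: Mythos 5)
Your proof is correct, but it takes a genuinely different route from the paper's. The paper disposes of the proposition in one line by appealing to the matrix identity in the proof of Theorem~\ref{th4}: after extending scalars to the splitting field and diagonalizing $A$, one has $(\gamma_1I+\gamma_2A+\gamma_3A^2)^n=\operatorname{diag}\left(\gamma^n,\sigma_2(\gamma^n),\sigma_3(\gamma^n)\right)$, so the $\oplus$-equation and $\beta=\gamma^n$ become literally the same equation read off the first eigenvalue; that argument leans on separability (pairwise distinct roots, via the characteristic $\neq 2,3$ hypothesis) and on the Galois action, both already paid for in Theorem~\ref{th4}. You instead stay entirely over $\mathbb{F}_q$: you take $A$ to be the companion matrix of $\chi$ (legitimate, since \eqref{z's} shows $\oplus$ depends only on $c_1,c_2,c_3$), recognize $x_1I+x_2A+x_3A^2$ as the regular representation $E_{\Phi(\mathbf{x})}$ of $\Phi(\mathbf{x})=x_1+x_2\alpha+x_3\alpha^2$, and deduce the intertwining property from injectivity of the algebra map $\beta\mapsto E_\beta$ --- the same device the paper itself uses in Proposition~\ref{prop1} to compute the norm. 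What your route buys: it is more elementary and self-contained (no splitting field, no Galois group, no use of the characteristic hypothesis beyond irreducibility of $\chi$), and it makes explicit both the isomorphism $\Phi$ and its polynomial-time invertibility, which is exactly what the computational reading of ``equivalence'' requires --- a point the paper relegates to the discussion following the proposition. What the paper's route buys: brevity, reuse of Theorem~\ref{th4}, and the eigenvalue picture with the conjugates $\sigma_i(\gamma^n)$, which is the same picture that drives the cyclicity proof. Both arguments ultimately rest on the same core fact, namely that \eqref{law} makes $\mathbf{x}\mapsto x_1I+x_2A+x_3A^2$ multiplicative and identifies its image with $\mathbb{F}_q[\alpha]\cong\mathbb{F}_{q^3}$.
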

\begin{proof}
Letting $\alpha =\alpha _1$, the statement follows
from the matrix formula in the proof of Theorem~\ref{th4}
taking the very definition of the group law $\oplus $
by the formula \eqref{law} into account.
\end{proof}

In the present case, Proposition~\ref{prop2} states the ``equivalence'' because
the reduction of problems (see, for example, \cite[p.\ 5]{KM},
\cite[Ch.\ 8]{Papadimitriou}) works both ways, namely, DLP in the group
$((\mathbb{F}_q)^3\setminus\{(0,0,0)\},\oplus)$
reduces to the DLP in $(\mathbb{F}_{q^3})^\ast $ and the other way around.
Hence, Proposition~\ref{prop2} proves that the use of the group
$\mathbb{G}=(\mathbb{F}_qP^2,\oplus )$  is safe for standard implementations
in PKC (e.g., see \cite[\S 1.6]{Menezes}), since the security
it provides is equivalent to that of DLP in $(\mathbb{F}_{q^3})^\ast $,
as long as the caveat stated in Remark~\ref{rem1} is taken into account.

In terms of cryptanalysis, in principle logarithms in $\mathbb{G}$ can be
computed using ``generic'' algorithms, i.e., those that assume no particular
structure in (or extra knowledge of) the group. The most popular ones are
Pohlig-Hellman (which reduces the computation in the whole group to the
computation of the logarithm in all subgroups of prime order of $\mathbb{G}$),
Shank's Baby Step/Giant Step, and Pollard's Rho algorithm. All of them need an
exponential computation time.

However, there exists the so-called index-calculus algorithm, which is much
faster as it is able to compute discrete algorithm in the multiplicative
group of a finite field in subexponential time (see, e.g., \cite{Odlyzko2}).
Since the operations in the proposed group $\mathbb{G} = (\mathbb{F}_qP^2,\oplus
)$ can be efficiently transferred to those in $(\mathbb{F}_{q^3})^\ast$, it
follows that index-calculus algorithm can be applied to the multiplicative group
of the latter. This fact does not render the group operation automatically
useless in the face of possible cryptographic applications, as long as proper
key lengths are utilized.

For general finite fields, such as the proposed one, with a multiplicative group
of size $N$, current state-of-the-art algorithms (including index-calculus)
report computation times of
\begin{equation}\label{LN}
L_{N}(\alpha,c) = \exp\left((c+o(1))(\log N)^\alpha (\log\log N)^{1-\alpha}
                      \right),
\end{equation}
where $\alpha$ and $c$ are parameters in the ranges $0<\alpha<1$ and $c>0$
(sometimes $c$ is omitted and we default to $L_N(\alpha)$). Actually, $\alpha$
drives the transition from an exponential-time algorithm (when $\alpha$
approaches $1$) to a pure polynomial-time algorithm (as $\alpha$ tends to $0$).

The first subexponential algorithms had complexity $L_N(1/2)$ and applied only
to prime fields. Soon $L_N(1/3)$ was achieved for any finite field, with values
for $c$ ranging from $(64/3)^{1/3}$ for fields with high characteristic to
$(128/9)^{1/3}$ for medium characteristic. When dealing with small
characteristic fields, recent research brought down the complexity to $L_N(1/4)$
(\cite{Joux}) and even to quasi-polynomial time (\cite{BGJ}, \cite{GKZ2}). If
the group size is $N=p^n$, and we write $p=L_{p^n}(l_p)$, then the
characteristic is considered ``small'', ``medium-sized'' or ``large''
depending on whether $l_p \leq 1/3$,  $1/3 < l_p < 2/3$, or $l_p \geq 2/3$,
respectively.

In any case, the previous results have been applied in practice and several
cryptanalysis have been successfully carried out (see \cite{AMOR-H},
\cite{KDLPS}), so it seems sensible to avoid using small characteristics and
also extensions of moderate characteristic included in the range threatened by
recent cryptanalytic techniques (\cite{BGJ}, \cite{GKZ2}, \cite{HAKT}).
However these algorithms are heuristic and are proved to work only
for certain particular cases, not difficult to circumvent: for example, if one
has $N=p^n$ it suffices to choose both $p$ and $n$ to be prime in order to
thwart both \cite{BGJ} and \cite{GKZ2}. For a detailed account of history and
current status, see \cite{JOP} (in particular \S 4.2), and \cite{GKZ1}.

Our proposal is to use a group $\mathbb{G}$ of prime order $n=q^2+q+1$, over a
ground field $\mathbb{F}_q$. Using formula \eqref{LN} we can compute how
many elements in $\mathbb{G}$ provide a given security level. Since the
number of elements is roughly the square of the value of $q$, it follows that
$q$ can be represented with only one half of the bits needed for $n$. This has
a direct impact on the computation time of the $\oplus$ operation in
$\mathbb{G}$, since it is performed in $\mathbb{F}_q$ (see equations
\eqref{z's} and cost analysis in subsection \ref{cost}).

\subsection{System set-up and system parameters for a key agreement protocol}
The group $\mathbb{G} = (\mathbb{F}_qP^2,\oplus )$ lends readily itself as a
building block for standard cryptographic applications to be constructed upon
it. One of such applications is a Diffie-Hellman-like key agreement protocol,
which will be described in the following sections.

In the following, we provide the necessary steps to set up the system. Moreover,
the users also need to fix some system parameters.

\bigskip

\noindent\textbf{System set-up}

\medskip

To set up the system, the following steps are in order:
\begin{enumerate}
\item Choose a ground field $\mathbb{F}_q$ with characteristic different from
$2$ or $3$.
\item
Select elements $c_1, c_2, c_3\in \mathbb{F}_q$ such that the polynomial
\[
\chi (X) = X^3-c_1X^2-c_2X-c_3
\]
is irreducible in $\mathbb{F}_q[X]$.
\item
Consider $\mathbb{F}_{q^3}\simeq \mathbb{F}_q[X]/(\chi (X))$. Select
$\alpha \in (\mathbb{F}_{q^3})^\ast $ such that it is a generator of
$(\mathbb{F}_{q^3})^\ast $.
\item
Compute the coordinates of $\alpha $ seen as a vector over
$\mathbb{F}_q$, which will be denoted as
$(\alpha_1, \alpha_2, \alpha_3)\in (\mathbb{F}_q)^3\setminus \{0,0,0\} $.
\item
Under the canonical projection
$\pi \colon (\mathbb{F}_q)^3\setminus\{0,0,0\} \to \mathbb{F}_qP^2$,
compute
$[\beta _1,\beta _2,\beta _3]=\pi (\alpha _1,\alpha _2,\alpha _3)$.
\end{enumerate}

\bigskip

\noindent\textbf{System parameters}

\medskip

Following the previous notation, the system parameters are defined by the set
$\mathcal{S}=\{\mathbb{F}_q, [\beta_1, \beta_2, \beta_3], c_1, c_2, c_3\} $.

\subsection{The key agreement protocol}
The key agreement follows the well-known Diffie-Hellman paradigm. Any two
users $A, B$, willing to agree on a common value, which remains secret, set
up a system and agree on its parameters, as stated previously.

The protocol runs as follows:
\begin{enumerate}
\item User $A$ selects $n_A\in \mathbb{Z}_\ell$, with $\ell = q^2+q+1$,
computes
\[
[\gamma^A_1, \gamma^A_2, \gamma^A_3] =
                \oplus^{n_A} [\beta_1, \beta_2, \beta_3]\in\mathbb{F}_qP^2
\]
and sends it to user B.
\item User $B$ selects $n_B\in \mathbb{Z}_\ell $, computes
\[
[\gamma^B_1, \gamma^B_2, \gamma^B_3] =
                \oplus^{n_B} [\beta_1, \beta_2, \beta_3]\in\mathbb{F}_qP^2
\]
and sends it to user A.
\item User $A$ computes $k_A = \oplus^{n_A}[\gamma^B_1, \gamma^B_2, \gamma^B_3]$.
\item User $B$ computes $k_B = \oplus^{n_B}[\gamma^A_1, \gamma^A_2, \gamma^A_3]$.
\end{enumerate}
According to the definitions, the following equalities clearly hold:
\begin{eqnarray*}
k_A = \oplus^{n_A}[\gamma^B_1, \gamma^B_2, \gamma^B_3]
& = & \oplus^{n_A}\left(\oplus^{n_B}
[\beta_1, \beta_2,\beta_3]\right) \\
& = & \oplus^{n_B}\left(\oplus^{n_A} [\beta_1, \beta_2,\beta_3]\right) \\
& = & \oplus^{n_B}[\gamma^A_1, \gamma^A_2, \gamma^A_3] = k_B.
\end{eqnarray*}
Hence, the properties of the operation $\oplus$ in $\mathbb{G}$ ensure that actually
$k_A=k_B$, which is the common value expected as the output of the protocol.

\subsection{Cost of the $\oplus$ operation in $\mathbb{G}$}\label{cost}
Let $S$ and $P$ be the number of field operations in order to perform
an addition and a multiplication respectively in $\mathbb{F}_q$.
From the formulas \eqref{z's} it follows that the total number
of operations for computing $\mathbf{x}\oplus \mathbf{y}$ is equal to $10S+15P$,
once the $2S+3P$ precomputations of $c_1c_3$, $c_1c_2+c_3$, and
$(c_1)^2+c_2$ are assumed.

Additionally, two multiplications and one inversion are needed to eventually
project the resulting point back to $\mathbb{F}_qP^2$. However, in a typical
setting their cost can be neglected when compared with the relatively much
larger number of sums and products that are to be carried out.

\subsection{A toy example}
If we take the prime field $\mathbb{F}_p$, with $p=131$, it is case that
$p^2+p+1=17293$ is also prime. Accordingly, the group $\mathbb{G}$ is cyclic.
We set the parameters $c_1=13$, $c_2=18$, $c_3=73$, since the polynomial
$\chi (X) = X^3-13X^2-18X-73$ is irreducible in $\mathbb{F}_{131}$.

Let us take the projective point $X =  [126,16,1]$ as a generator of
$\mathbb{G}$. If we select now another projective point $Y=[86,120,1]$, we
find by exhaustive search the integer $n$ such that $Y = \oplus^{n}X$:
\begin{multline*}
[126,16,1] \rightarrow [117, 130, 1] \rightarrow [11, 15, 1]
\rightarrow [71, 56, 1] \\
\rightarrow [16, 98, 1] \rightarrow [72, 62, 1] \rightarrow [111, 125, 1]
\rightarrow [110, 130, 1] \\
\rightarrow [130, 114, 1] \rightarrow [86, 120, 1].
\end{multline*}

Since the operation has been iterated ten times, we conclude $Y=\oplus^{10}X$
for this particular pair, so that $\log_{X} Y = 10$.

\section{A more robust system}\label{S:Robust}

The security of the cryptosystem proposed in the previous sections can be
increased by extending the theory developed for a field to the case of
a unitary commutative ring $R$.

In fact, let $M$ be a free $R$-module of finite rank and let $A\colon
M\rightarrow M$ be an $R$-linear map with characteristic polynomial $\chi
_{A}(X)=\det(XI-\Lambda)$, $X$ being an indeterminate, $I$ the identity matrix
of order $r=\operatorname*{rank}M$, and $\Lambda$ the matrix of $A$ in an
arbitrary basis for $M$. According to \cite[III, \S 8, 11.Proposition 20]%
{Bourbaki} Cayley-Hamilton Theorem holds in this setting, namely $\chi_{A}(A)
=0$.

Hence, if $M=R^{3}$ and $\chi_{A}(X)=X^{3}-c_{1}X^{2}-c_{2}X-c_{3}$,
$c_{1},c_{2},c_{3}\in R$, then $A^{3}=c_{1}A^{2}+c_{2}A+c_{3}I$.

As above, we can define a degree-$3$ homogeneous polynomial in
$R[x_{1},x_{2},x_{3}]$ by setting $Q(x_{1},x_{2},x_{3})=\det\left(
x_{1}I+x_{2}\Lambda+x_{3}\Lambda^{2}\right)  $. As a computation shows, we
have%
\begin{align*}
Q(x_{1},x_{2},x_{3})  &  =-c_{2}x_{1}(x_{2})^{2}+\left[  (c_{2})^{2}%
-2(c_{1}c_{3})\right]  x_{1}(x_{3})^{2}+c_{1}(x_{1})^{2}x_{2}\\
&  +\left[  (c_{1})^{2}+2c_{2}\right]  (x_{1})^{2}x_{3}-(c_{2}c_{3}%
)x_{2}(x_{3})^{2}+(c_{1}c_{3})(x_{2})^{2}x_{3}\\
&  -\left(  c_{1}c_{2}+3c_{3}\right)  x_{1}x_{2}x_{3}+(x_{1})^{3}+c_{3}%
(x_{2})^{3}+(c_{3})^{2}(x_{3})^{3},
\end{align*}
thus proving that Lemma~\ref{lemma1} still holds in this case; i.e., $Q$
depends on $\chi_{A}$ only, but not on the matrix $\Lambda$.

The projective plane over $R$ is then defined as follows: $RP^{2}%
=(R^{3}\setminus\{\mathbf{0\}})/R^{\ast}$, where $R^{\ast}$ denotes the
multiplicative group of invertible elements in $R$ and $R^{\ast}$ acts on
$R^{3}\setminus\{\mathbf{0}\}$ by%
\[%
\begin{array}
[c]{llll}%
\lambda\cdot(x_{1},x_{2},x_{3})= & (\lambda x_{1},\lambda x_{2},\lambda
x_{3}), & \forall\lambda\in R^{\ast}, & \forall(x_{1},x_{2},x_{3})\in
R^{3}\setminus\{\mathbf{0}\}.
\end{array}
\]

Proceeding as in the previous sections, a composition law $\oplus\colon
R^{3}\times R^{3}\rightarrow R^{3}$, $(\mathbf{x},\mathbf{y})\mapsto
\mathbf{z}=\mathbf{x}\oplus\mathbf{y}$, $\mathbf{x}=(x_{1},x_{2},x_{3})$,
$\mathbf{y}=(y_{1},y_{2},y_{3})$, $\mathbf{z}=(z_{1},z_{2},z_{3})$, can be
defined by the formula
\begin{equation*}
\begin{array}
[c]{rl}%
z_{1}I+z_{2}A+z_{3}A^{2}= & \left(  x_{1}I+x_{2}A+x_{3}A^{2}\right)  \left(
y_{1}I+y_{2}A+y_{3}A^{2}\right)  ,
\end{array}
\end{equation*}
and similarly we deduce
\begin{equation}
\left(
\begin{array}
[c]{c}%
z_{1}\\
z_{2}\\
z_{3}%
\end{array}
\right)  =\left(
\begin{array}
[c]{ccc}%
x_{1} & c_{3}x_{3} & c_{1}c_{3}x_{3}+c_{3}x_{2}\\
x_{2} & x_{1}+c_{2}x_{3} & c_{2}x_{2}+c_{3}x_{3}+c_{1}c_{2}x_{3}\\
x_{3} & x_{2}+c_{1}x_{3} & x_{1}+(c_{1})^{2}x_{3}+c_{1}x_{2}+c_{2}x_{3}%
\end{array}
\right)  \left(
\begin{array}
[c]{c}%
y_{1}\\
y_{2}\\
y_{3}%
\end{array}
\right)  .\label{system}%
\end{equation}

The determinant of the matrix of (\ref{system}) is equal to $Q(x_{1}%
,x_{2},x_{3})$. Hence, $\oplus$ induces a composition law $\oplus\colon
Q^{-1}(R^{\ast})\times Q^{-1}(R^{\ast})\rightarrow Q^{-1}(R^{\ast})$. If $C$
denotes the set of classes modulo $R^{\ast}$ of points $\mathbf{x}\in R^{3}$
such that $Q(\mathbf{x})\in R\backslash R^{\ast}$, then $\oplus$ also induces
a composition law $\oplus\colon PQ^{-1}(R^{\ast})\times PQ^{-1}(R^{\ast
})\rightarrow PQ^{-1}(R^{\ast})$, where $PQ^{-1}(R^{\ast})=RP^{2}\setminus C$,
as if $Q(\mathbf{x})$ is invertible and $\lambda\in R^{\ast}$, then
$Q(\lambda\mathbf{x})=\lambda^{3}Q(\mathbf{x})$ is also invertible.

The same proof given in the case of a field shows that the composition law
$\oplus$ is associative, commutative and admits an identity element, which is
the vector $(1,0,0)$.

If $m=pq$ with $p\neq q$ prime integers, then from Chinese Remainder Theorem
there is a ring isomorphism between $\mathbb{Z}/m\mathbb{Z}$ and the product
ring $\mathbb{F}_{p}\times\mathbb{F}_{q}$. Hence each vector $\mathbf{x}\in
R^{3}$ can be assigned a pair $(\mathbf{x}^{\prime},\mathbf{x}^{\prime\prime
})$ in $(\mathbb{F}_{p})^{3}\times(\mathbb{F}_{q})^{3}$ and the group
$(\mathbb{Z}/m\mathbb{Z)}^{\ast}=(\mathbb{F}_{p})^{\ast}\times(\mathbb{F}%
_{q})^{\ast}$ acts on $R^{3}$ in the same way as $(\mathbb{F}_{p})^{\ast}$
acts on $(\mathbb{F}_{p})^{3}$ and $(\mathbb{F}_{q})^{\ast}$ does on
$(\mathbb{F}_{q})^{3}$.

Consequently, $\mathbf{x}\neq0$ if and only if at least one of its two
components $\mathbf{x}^{\prime},\mathbf{x}^{\prime\prime}$ is distinct from
$\mathbf{0}$, so that%
\begin{equation}%
\begin{array}
[c]{rl}%
R^{3}\setminus\{\mathbf{0\}}= & \left[  \{\mathbf{0}\}\times\left(
(\mathbb{F}_{q})^{3}\setminus\{\mathbf{0\}}\right)  \right]  \sqcup\left[
\left(  (\mathbb{F}_{p})^{3}\setminus\{\mathbf{0\}}\right)  \times
\{\mathbf{0}\}\right]  \sqcup\medskip\\
& \multicolumn{1}{r}{\left[  \left(  (\mathbb{F}_{p})^{3}\setminus
\{\mathbf{0\}}\right)  \times\left(  (\mathbb{F}_{q})^{3}\setminus
\{\mathbf{0\}}\right)  \right]  .}%
\end{array}
\label{descomposicion}%
\end{equation}
Therefore $(\mathbb{Z}/pq\mathbb{Z)}P^{2}=\mathbb{F}_{p}P^{2}\sqcup
\mathbb{F}_{q}P^{2}\sqcup\left(  \mathbb{F}_{p}P^{2}\times\mathbb{F}_{q}%
P^{2}\right)  $.

Moreover, letting $\mathbf{z}=(\mathbf{z}^{\prime},\mathbf{z}^{\prime\prime
})=\mathbf{x}\oplus\mathbf{y}$, as a computation shows, one obtains
$\mathbf{z}^{\prime}=\mathbf{x}^{\prime}\oplus\mathbf{y}^{\prime}$ and
$\mathbf{z}^{\prime\prime}=\mathbf{x}^{\prime\prime}\oplus\mathbf{y}%
^{\prime\prime}$, and $Q(\mathbf{x})$ is invertible if and only if
$Q(\mathbf{x})\operatorname{mod}p$ and $Q(\mathbf{x})\operatorname{mod}q$ both
are invertible in $\mathbb{Z}/p\mathbb{Z}$ and $\mathbb{Z}/q\mathbb{Z}$
respectively. If $\mathbf{x}\in R^{3}$ corresponds to $(\mathbf{x}^{\prime
},\mathbf{x}^{\prime\prime})$ in $(\mathbb{F}_{p})^{3}\times(\mathbb{F}%
_{q})^{3}$, then $Q(\mathbf{x})=(Q^{\prime}(\mathbf{x}^{\prime}),Q^{\prime
\prime}(\mathbf{x}^{\prime\prime}))$, where $Q^{\prime}(\mathbf{x}^{\prime
})=\det\left(  x_{1}^{\prime}I+x_{2}^{\prime}\Lambda^{\prime}+x_{3}^{\prime
}\Lambda^{\prime2}\right)  $, $Q^{\prime\prime}(\mathbf{x}^{\prime\prime
})=\det\left(  x_{1}^{\prime\prime}I+x_{2}^{\prime\prime}\Lambda^{\prime
\prime}+x_{3}^{\prime\prime}\Lambda^{\prime\prime2}\right)  $, and
$\Lambda^{\prime}=\Lambda\operatorname{mod}p$, $\Lambda^{\prime\prime}%
=\Lambda\operatorname{mod}q$. Hence%
\begin{equation}
Q^{-1}(R^{\ast})=\left\{  (\mathbf{x}^{\prime},\mathbf{x}^{\prime\prime}%
)\in(\mathbb{F}_{p})^{3}\times(\mathbb{F}_{q})^{3}:Q^{\prime}(\mathbf{x}%
^{\prime})\neq0,Q^{\prime\prime}(\mathbf{x}^{\prime\prime})\neq0\right\}
.\label{neq_neq}%
\end{equation}

We set%
\[
\left.
\begin{array}
[c]{ll}%
\chi^{\prime}(X)=X^{3}-c_{1}^{\prime}X^{2}-c_{2}^{\prime}X-c_{3}^{\prime}%
\in\mathbb{F}_{p}[X], & c_{i}^{\prime}=c_{i}\operatorname{mod}p\\
\chi^{\prime\prime}(X)=X^{3}-c_{1}^{\prime\prime}X^{2}-c_{2}^{\prime\prime
}X-c_{3}^{\prime\prime}\in\mathbb{F}_{q}[X], & c_{i}^{\prime\prime}%
=c_{i}\operatorname{mod}q
\end{array}
\right\}  1\leq i\leq3.
\]

If both $\chi^{\prime}$ and $\chi^{\prime\prime}$ are irreducible polynomials
in $\mathbb{F}_{p}[X]$ and $\mathbb{F}_{q}[X]$, respectively, then according
to Proposition~\ref{prop1}, the points of the associated curves $C^{\prime}$ and
$C^{\prime\prime}$ reduce to the origin; i.e., $Q^{\prime-1}(0)=\{\mathbf{0}%
_{p}\}$, $Q^{\prime\prime-1}(0)=\{\mathbf{0}_{q}\}$, where $\mathbf{0}_{p}$
and $\mathbf{0}_{q}$ denote the origin in $(\mathbb{F}_{p})^{3}$ and
$(\mathbb{F}_{q})^{3}$, respectively.

From (\ref{descomposicion}), taking (\ref{neq_neq}) into account, it follows:
$PQ^{-1}(R^{\ast})=\mathbb{F}_{p}P^{2}\times\mathbb{F}_{q}P^{2}$.
Consequently, we conclude that $PQ^{-1}(R^{\ast})\cong S_{p}\times S_{q}$,
where $S_{p}$ and $S_{q}$ are the subgroups given by%
\[
S_{p}=(\mathbb{F}_{p}P^{2}\times\{(1,0,0)\},\oplus),\quad S_{q}%
=(\{(1,0,0)\}\times\mathbb{F}_{q}P^{2},\oplus),
\]
and from Theorem~\ref{th4} we thus obtain

\begin{proposition}
If the polynomials $\chi^{\prime}$ and $\chi^{\prime\prime}$ are irreducible
in $\mathbb{F}_{p}[X]$ and $\mathbb{F}_{q}[X]$, respectively, then the group
$(PQ^{-1}(R^{\ast})=\mathbb{F}_{p}P^{2}\times\mathbb{F}_{q}P^{2},\oplus)$ is
isomorphic to the direct product of the cyclic groups $S_{p}$ and $S_{q}$.
Hence $(PQ^{-1}(R^{\ast}),\oplus)$ is cyclic if and only if $a=p^{2}+p+1$ and
$b=q^{2}+q+1$ are coprimes; i.e., $\gcd(a,b)=1$.
\end{proposition}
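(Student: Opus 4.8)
The plan is to observe that nearly all the work has already been done in the paragraphs preceding the statement: we have established that $PQ^{-1}(R^{\ast})=\mathbb{F}_pP^2\times\mathbb{F}_qP^2$ and that the operation $\oplus$ acts componentwise, i.e. $\mathbf{z}^{\prime}=\mathbf{x}^{\prime}\oplus\mathbf{y}^{\prime}$ and $\mathbf{z}^{\prime\prime}=\mathbf{x}^{\prime\prime}\oplus\mathbf{y}^{\prime\prime}$. Consequently $(PQ^{-1}(R^{\ast}),\oplus)$ is literally the external direct product of the two groups $(\mathbb{F}_pP^2,\oplus)$ and $(\mathbb{F}_qP^2,\oplus)$, whose common identity is $((1,0,0),(1,0,0))$. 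Under this componentwise law, $S_p=\mathbb{F}_pP^2\times\{(1,0,0)\}$ and $S_q=\{(1,0,0)\}\times\mathbb{F}_qP^2$ are subgroups, isomorphic respectively to $(\mathbb{F}_pP^2,\oplus)$ and $(\mathbb{F}_qP^2,\oplus)$ via the projections onto the non-trivial factor.

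First I would record the internal direct product structure: one has $S_p\cap S_q=\{((1,0,0),(1,0,0))\}$, the two subgroups commute elementwise, and every $(\mathbf{u},\mathbf{v})\in PQ^{-1}(R^{\ast})$ factors uniquely as $(\mathbf{u},(1,0,0))\oplus((1,0,0),\mathbf{v})$. This yields the claimed isomorphism $PQ^{-1}(R^{\ast})\cong S_p\times S_q$. Since $\chi^{\prime}$ and $\chi^{\prime\prime}$ are irreducible and $\operatorname{char}\mathbb{F}_p,\operatorname{char}\mathbb{F}_q\neq 2,3$, Theorem~\ref{th4} applies to each factor and shows that $S_p$ and $S_q$ are cyclic, of orders $|\mathbb{F}_pP^2|=p^2+p+1=a$ and $|\mathbb{F}_qP^2|=q^2+q+1=b$ respectively. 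Thus $PQ^{-1}(R^{\ast})\cong S_p\times S_q$ is a product of two cyclic groups of orders $a$ and $b$.

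The remaining assertion is the textbook criterion for cyclicity of a product of two cyclic groups. If $\gcd(a,b)=1$, the Chinese Remainder Theorem gives $S_p\times S_q\cong\mathbb{Z}/ab\mathbb{Z}$, which is cyclic. Conversely, if $d=\gcd(a,b)>1$, then the order of any element $(x,y)\in S_p\times S_q$ divides $\operatorname{lcm}(a,b)=ab/d<ab=|S_p\times S_q|$, so no element can generate the whole group and it is not cyclic. This settles the stated equivalence.

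I expect no serious obstacle here: the structural and cyclicity facts for the individual factors are supplied by Theorem~\ref{th4}, and the componentwise decomposition of $\oplus$ was verified above, so the argument reduces to the standard group-theoretic lemma on when $S_p\times S_q$ is cyclic. The only point that genuinely requires care is pinning down the two orders $a$ and $b$ exactly; this follows because irreducibility of $\chi^{\prime}$ and $\chi^{\prime\prime}$ forces $Q^{\prime-1}(0)=\{\mathbf{0}_p\}$ and $Q^{\prime\prime-1}(0)=\{\mathbf{0}_q\}$ by Proposition~\ref{prop1}, so that each factor is the full projective plane with its complete count of $p^2+p+1$, respectively $q^2+q+1$, points.
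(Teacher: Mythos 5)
Your proposal is correct and follows essentially the same route as the paper: the paper gives no separate proof of this proposition, deducing it directly from the preceding paragraphs (the componentwise decomposition $PQ^{-1}(R^{\ast})=\mathbb{F}_pP^2\times\mathbb{F}_qP^2$ under $\oplus$) together with Theorem~\ref{th4} and the standard fact that a product of cyclic groups of orders $a$ and $b$ is cyclic exactly when $\gcd(a,b)=1$. Your write-up merely makes explicit what the paper leaves implicit --- the internal direct product verification, the identification of the orders $a=p^2+p+1$ and $b=q^2+q+1$ via Proposition~\ref{prop1}, and both directions of the coprimality criterion --- which is a faithful completion rather than a different argument.
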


\begin{remark}
If $d=\gcd(a,b)$, then $a=da^{\prime}$, $b=db^{\prime}$, with $\gcd(a^{\prime
},b^{\prime})=1$. The cyclic subgroup $S$ in\ $\mathbb{Z}/a\mathbb{Z}%
\times\mathbb{Z}/b\mathbb{Z}$ spanned by $(1\operatorname{mod}%
a,1\operatorname{mod}b)$ is of order $\frac{ab}{d}$. As $d<pq$ and
$a=O(p^{2})$, $b=O(q^{2})$, it follows: $\frac{ab}{d}>\frac{O(p^{2}q^{2})}%
{pq}=O\left(  pq\right)  $, which indicates that in general the group $S$ is
large enough, even if $a$ and $b$ are not coprimes.
\end{remark}

\begin{remark}
It is clear that the group $(PQ^{-1}(R^{\ast}),\oplus)$ is also amenable a as
building block for a key-agreement protocol by choosing $R=\mathbb{Z}_m$, with
$m$ composite. Observe that its security is enhanced with respect to its
counterpart $\mathbb{F}_q$, $q$ a prime power, since the algorithms known to
be efficient to compute discrete algorithms only work in the multiplicative
group of a field. This means that one is forced to factorize $m$ in order to
apply such algorithms to the present case, thus increasing the time complexity
and the security of the system, though at the price of doubling the key length.
\end{remark}

\section{Conclusions}\label{S:Conclusions}

In this work, we have defined a group law, $\oplus$, over the
set $\mathbb{F} _qP^2$, and considered the discrete logarithm
problem associated to them.
We have analyzed their properties and stated the security of
the problem considered. Moreover, based on it, we have defined
a cryptographic key agreement protocol as one possible
application of this problem to public key cryptography. Finally, we shift the
system to the group $(PQ^{-1}(R^{\ast}),\oplus)$ over the ring
$\mathbb{Z}/pq\mathbb{Z}$, which turns out to be completely analogous to the
previous one and offers an enhanced security, though at the cost of some extra
key length.

As future work, we think that it is possible to extend
this discrete logarithm problem in order to define new
cryptographic protocols for encryption/decryption
and digital signatures, among others, in a similar way as
ElGamal or elliptic curve cryptosystems were defined
from the Diffie-Hellman key agreement protocol.

\medskip

\noindent \textit{Acknowledgments:\/}
This research has been partially supported by Ministerio de Economía,
Industria y Competitividad (MINECO), Agencia Estatal de Investigación
(AEI), and Fondo Europeo de Desarrollo Regional (FEDER, UE) under
project COPCIS, reference TIN2017-84844-C2-1-R, and by Comunidad de
Madrid (Spain) under project CYNAMON reference P2018/TCS-4566,
also co-funded by European Union FEDER funds.

\end{document}